%%%%%%% Page & document style %%%%%%%%%%%%%%

\documentclass{llncs}

\usepackage{amssymb}
\usepackage{latexsym}
\usepackage{graphicx}
% \usepackage{epsfig}

% \addtolength{\textheight}{4mm}
% \addtolength{\textwidth}{4mm}

% \usepackage{times}

%%%%%%%%%%%%%%%%%%%%% Macros %%%%%%%%%%%%%%%%%%%%%

\newcommand{\Sch}{{\bf S}}
\newcommand{\Scht}{{\bf T}}

\newcommand{\dom}{\mathsf{C}}
\newcommand{\ndom}{\mathsf{N}}

\newcommand{\vect}[1]{\bar{#1}}
\newcommand{\x}{\vect{X}}
\newcommand{\y}{\vect{Y}}
\newcommand{\z}{\vect{Y}}

\newcommand{\ra}{\rightarrow}

\newcommand{\lra}{\leftrightarrow}

\newcommand{\ftgd}[4]{\forall {#1} ( #2 \rightarrow \exists #3 ( #4) )}
\newcommand{\tgd}[2]{#1 \rightarrow #2}
\newcommand{\egd}[3]{\forall {#1} (#2 \rightarrow (#3))}

\newcommand{\pred}[1]{\mathit{#1}}
\newcommand{\attr}[1]{\mathit{#1}}
\newcommand{\const}[1]{\mathit{#1}}

\newcommand{\chase}{\mathrm{chase}}

\newcommand{\homo}{\stackrel{\mathsf{hom}}{\ra}}
\newcommand{\isom}{\stackrel{\mathsf{iso}}{\lra}}

\newcommand{\cert}[3]{\mathrm{cert}(#1,#2,#3)}

\newcounter{cefalo}

\newenvironment{plist}
  {\begin{list}{\textit{(\arabic{cefalo})}}{\usecounter{cefalo}}}
  {\end{list}}

% \renewcommand{\baselinestretch}{.99}

%%%%%%%%%%%%%%%%%%%%% Paper %%%%%%%%%%%%%%%%%%%%%

\begin{document}

\sloppy

\title{Containment of Schema Mappings\\ for Data Exchange\\ (Preliminary Report)}

\author{Andrea Cal\`{\i}$^{1,3}$ and Riccardo
  Torlone$^{2}$}

% \institute{$^1$Dept.~of Computer Science and Inf.~Systems, Birkbeck University of London, UK\\
% %  $^2$Computing Laboratory, University of Oxford, UK\\
%   $^3$Oxford-Man Institute of Quantitative Finance, University of
%   Oxford,  UK\\[2mm]
%   \texttt{andrea{@}dcs.bbk.ac.uk}}

% \institute{University of Oxford\\
%   \email{andrea.cali@comlab.ox.ac.uk} \and Universit\`a Roma Tre\\
%   \email{torlone@dia.uniroma3.it}}

% \institute{
%   \begin{minipage}[t]{0.4\linewidth}\centering
%     $^1$Computing Laboratory
%     University of Oxford\\
%     Oxford, UK\\
%   \end{minipage}
% %
%   \begin{minipage}[t]{0.55\linewidth}\centering
%     $^2$Dipartimento di Informatica e Automazione\\
%     Universit\`a Roma Tre\\
%     Roma, Italy\\
%   \end{minipage}\\[3mm]
% %
%   \begin{minipage}[t]{0.6\linewidth}\centering
%     $^3$Oxford-Man Institute of Quantitative Finance\\
%     University of Oxford\\
%     Oxford, UK\\
%   \end{minipage}\\
% %
%   \begin{center}
%     \texttt{andrea.cali{@}comlab.ox.ac.uk},
%     \texttt{torlone{@}dia.uniroma3.it}
%   \end{center}
% }

\institute{$^1$Dept.~of Computer Science and Inf.~Systems, Birkbeck University of London, UK\\
% \institute{
% Oxford-Man Institute of Quantitative Finance, University of Oxford, UK\\
$^2$Dip.~di Informatica e Automazione, Universit\`a Roma Tre, Italy\\
  $^3$Oxford-Man Institute of Quantitative Finance, University of
  Oxford,  UK\\[2mm]
  \texttt{andrea{@}dcs.bbk.ac.uk},\, \texttt{torlone{@}dia.uniroma3.it}
}

\maketitle

\begin{abstract}
  In data exchange, data are materialised from a source schema to a target
  schema, according to suitable source-to-target constraints.  Constraints are
  also expressed on the target schema to represent the domain of interest.  A
  schema mapping is the union of the source-to-target and of the target
  constraints.
  % ; in general the data materialised from the source do not satisfy such
  % constraints.
  In this paper, we address the problem of containment of schema mappings for
  data exchange, which has been recently proposed in this framework as a step
  towards the optimization of data exchange settings.  We refer to a natural
  notion of containment that relies on the behaviour of schema mappings with
  respect to conjunctive query answering, in the presence of so-called LAV
  TGDs as target constraints.  Our contribution is a practical technique for
  testing the containment based on the existence of a homomorphism between
  special ``dummy'' instances, which can be easily built from schema mappings.
  We argue that containment of schema mappings is decidable for most practical
  cases, and we set the basis for further investigations in the topic.  This
  paper extends the preliminary results of~\cite{CaTo08}.
\end{abstract}

%%% TeX-master: "main"
%%% End: 

\section{Introduction}

% Today, a complex activity of many distributed database applications consists
% of restructuring information into a format desired by the consumer of such
% information.

In distributed database applications, restructuring information from a certain
format into a desired one is a complex task.  Consequently, there is a strong
need for methods and tools supporting the problem of transforming data coming
in different formats from multiple sources.
A clean formalisation of this problem has been recently proposed under the
name of \emph{data exchange}~\cite{FaKP05}.  In this approach, a first order
logic specification, called \emph{schema mapping}, describes declaratively how
data structured under one schema (the \emph{source schema}) are to be
transformed into data structured under a different schema (the \emph{target
  schema}): the goal is to take a given instance of the source schema and
translate it into an instance of the target schema such that the schema
mapping and the constraints (a.k.a.~dependencies) over the target are
satisfied.

For example, if we consider a source database schema with a single relation
$\pred{employee}(\attr{Name}, \attr{Salary}, \attr{Department})$ and a target
database composed by the pair of relations $\pred{person}(\attr{Name},
\attr{Address})$ and $\pred{salary}(\attr{Employee}, \attr{Amount})$, then a
typical schema mapping would be represented by the following
\emph{source-to-target dependency} $\sigma_1$:
$\tgd{\pred{employee}(X,Y,Z)}{\exists W\, \pred{person}(X,W),
  \pred{salary}(X,Y)}$.  This dependency, which is a TGD (tuple-generating
dependency), states that for each tuple $\pred{employee}(e,s,d)$ in the source
there must be two tuples in the target: $\pred{person}(e,a)$ and
$\pred{salary}(e,s)$, where $a$ is a new, unknown value.
% TGDs are first-order formulae (we omit the universal quantifiers to avoid
% notational clutter).
%
A target dependency, stating that each employee name occurring in relation
$\pred{salary}$ must occur in relation $\pred{person}$ in the first position,
can also be expressed with a dependency $\sigma_2$ of the same form:
$\tgd{\pred{salary}(X,Y)}{\exists Z\, \pred{person}(X,Z)}$.
In~\cite{FKMP05} it has been shown that, for a certain class of target
dependencies, given a source instance $I$ over the source schema, a ``most
general'' solution to this problem, called universal solution,
% (that is, the restructuring of $I$ under the given schema mapping) can be
% computed in polynomial time
can be computed by applying to $I$ a well-known procedure called \emph{chase}.
Intuitively, this procedure generates the solution by enforcing in $I$ the
satisfaction of the dependencies.  For example, if the instance of the above
source schema consists of the single tuple $\pred{employee}(\const{john}, 50,
\const{toys})$ then, in order to enforce the satisfaction of source-to-target
dependency, the chase generates the following tuples over the target schema:
$\pred{person}(\const{john}, v_1)$ and $\pred{salary}(\const{john},50)$, where
$v_1$ denotes a labelled null value.  The answer to a conjunctive query on a
universal instance, without considering result tuples that have at least one
null, coincides with the correct answers (called \emph{certain answers}) to
the query.
% Observe that the set of such tuples satisfies the dependency defined on the
% target.

A natural question that arises in this scenario is the following: given a
schema mapping $M_1$, is there a different schema mapping $M_2$ over the same
source and target that has the same solution of $M_1$ and provides a simpler
and, possibly, more efficient way to generate such solution?  As an example,
it is possible to see that the above schema mapping is equivalent to the one
obtained by replacing $\sigma_1$ with the dependency $\sigma_3$ defined as
$\tgd{ \pred{employee}(X,Y,Z)}{\pred{salary}(X,Y)}$ since, intuitively, the
generation of a tuple in $\pred{person}$ is guaranteed, during the chase, by
the target dependency $\sigma_2$.  

%%% Good example, but too long for the intro
% A universal instance for the single tuple $\pred{employee}(\const{john}, 50,
% \const{toys})$ under $\{\sigma_3, \sigma_2\}$ is obtained by first adding
% $\pred{salary}(\const{john},50)$ according to $\sigma_3$, and then
% $\pred{person}(\const{john},v_2)$ according to $\sigma_2$.  The universal
% solution $U_2$ for the mapping $M_1$ consisting of $\{\sigma_3, \sigma_2\}$ is
% obtained from the universal solution $U_1$ for the mapping $M_2$ consisting of
% $\{\sigma_1, \sigma_2\}$ by renaming of the null $v_1$ into $v_2$; since
% evaluating a conjunctive query on $U_2$ will not get us more answers than
% evaluating it on $U_1$, and we say that $M_2$ is contained in $M_1$.
% Trivially, the converse also holds, and we can say that $M_1$ and $M_2$ are
% equivalent.

More precisely, we define a \emph{schema mapping} as the union of the
source-to-target and target dependencies. % (TGDs in our case).
We say that a schema mapping $M$ is \emph{contained into} another mapping $M'$
if for every source instance, the answers to every conjunctive query $q$ under
$M$ are a subset of the answers to $q$ under $M'$.  Two schema mappings are
said to be \emph{equivalent} when they are contained into each other.
% In the example above, the fact that $\{\sigma_1, \sigma_2\}$ is equivalent
% to $\{\sigma_3, \sigma_2\}$ can be verified by observing that for every
% instance, the

Different variants of this problem have been addressed by Fagin et
al. in~\cite{FKNP08}, where three different notions of schema mapping
equivalence are proposed, and a characterisation of the problem in some
special cases is presented; however, except for one case, no positive results
are given.  However, there is still space for positive results in cases that
are interesting from the practical point of view, as we will show in the
following.

In this paper, we address the problem of containment of schema mappings for
data exchange.  We first propose a definition of schema mapping containment
which corresponds to the notion of conjunctive-query equivalence proposed
in~\cite{FKNP08}.  We provide some basic results that follow from general
properties of first-order formulae and known characterisations of the data
exchange problem.  We then consider the case of LAV TGDs, i.e., TGDs that have
a single atom in the left-hand side.  LAV TGDs are a quite general class of
TGDs that, together with some additional dependencies (which we do not
consider here) are able to represent most known formalisms for ontology
reasoning, as shown in~\cite{CaGL09} for \emph{Linear Datalog$^\pm$}, a
language whose rules are a special case of LAV TGDs, with a single atom in the
left-hand side.  Notice that, under LAV TGDs (and Linear Datalog$^\pm$ as
well), due to the existentially quantified variables in the right-hand side of
TGDs, the chase does not always terminate, forcing us to reason on instances
of unbounded size for the target schema.  Notice also that LAV TGDs generalise
the well-known class of \emph{inclusion dependencies}~\cite{AbHV95}.
% This setting has to deal with the additional complication of considering
% models of unbounded size for the target schema.
We propose a practical technique for testing the containment of schema mapping
in this case: our technique relies on the construction of a finite set of
``dummy'' instances for the source schema, which depends only on the
source-to-target mappings.
The paper ends with a discussion on a number of challenging problems that
naturally follow from these preliminary steps.  To the best of our knowledge,
this is the first result of this kind in this context that follows the line of
previous approaches to the problem of checking the equivalence and
optimisation of relational expressions representing
queries~\cite{AhSU79a,AhSU79b}.  

The rest of the paper is organised as follows. In
Section~\ref{sec:preliminaries} we provide the basic definitions and recall
some useful results on the data exchange problem.  Then, in
Section~\ref{sec:containment}, we introduce the problem of schema mapping
containment and present our technique for testing the containment.  Finally,
in Section~\ref{sec:conclusions} we discuss future directions of research and
draw some conclusions.
 
\section{Preliminaries}
\label{sec:preliminaries}

\subsection{Basics}

A \emph{(relational) schema} $\Sch$ is composed by a set of \emph{relational
  predicates} $r(A_1,\ldots,A_n)$, where $r$ is the name of the predicate and
$A_1,\ldots,A_n$ are its \emph{attributes}.  A predicate having $n$ attributes
is said to be $n$-ary or, equivalently, to have \emph{arity} $n$.
% Each attribute is associated with a set of values called the \emph{domain}
% of the attribute.
An instance of a relation $r(A_1,\ldots,A_n)$ is a set of tuples, each of
which associates with each $A_i$ a value.  An instance $I$ of a schema $\Sch$
contains an instance of each relation in $\Sch$.  In the following, except
when explicitly stated, we assume that instances are finite.  We shall
consider values from two (infinite) domains: a set of constants $\dom$, and a
set of \emph{labelled nulls} $\ndom$; the latter are intended to be
placeholders for unknown constants, therefore they can be interpreted also as
existentially quantified variables.

% A \emph{dependency} over a schema $\Sch$ is a first order formula of the form:
% $\forall{\x}(\phi(\x)\rightarrow \chi(\x))$ where $\phi(\x)$ and $\chi(\x)$
% are formulas over $\Sch$, and $\x$ are the free variables of the formulas,
% ranging over the domains of the attributes occurring in $\Sch$.
%
As usual in data exchange, we will focus on two special kind of constraints
(a.k.a.~dependencies): tuple generating dependencies (TGDs) and equality
generating dependencies (EGDs), as it is widely accepted that they include
most of the naturally-occurring constraints on relational databases.  We will
use the symbol $\vect{X}$ to denote a sequence (or, with slight abuse of
notation, a set) of variables $X_1, \ldots, X_k$.  A TGD has the form:
$\ftgd{\x}{\phi(\x)}{\y}{\psi(\x, \z}$ where $\phi(\x)$ and $\psi(\x, \y)$ are
conjunction of atomic formulas, and allows (together with suitable EGDs) the
specification of foreign key, inclusion, and multivalued dependencies, among
others.  An EGD has the form: $\egd{\x}{\phi(\x)}{X_1=X_2}$ where $\phi(\x)$
is a conjunction of atomic formulas and $X_1$, $X_2$ are variables in $\x$: it
strictly generalises key constraints and functional
dependencies~\cite{AbHV95}, as it can be easily seen.  In both TGDs and EGDs,
the left-hand side is called \emph{body}, and the right-hand side is called
\emph{head}.  We usually omit the universal quantifiers to simplify the
notation.

A LAV TGD is a TGD having a single atom on the left-hand side, i.e., of the
form $r(\x) \ra \exists\y\, \psi(\x,\y)$.  Notice that variables can be
repeated in the body atom.

In the following, for space reasons, we will focus on TGDs only; significant
classes of EGDs can be added without changing our results; see, for instance,
the class of keys and non-conflicting TGDs in~\cite{CaGL09}.

We now provide the important notion of \emph{homomorphism}.  We consider
instances having constants and nulls % and variables
as values.
\begin{definition}
  A \emph{homomorphism} from an instance $I$ to an instance $J$, both of the
  same schema, is a function $h$ from constant values and
  nulls % and variables
  occurring in $I$ to constant values and nulls % and variables
  occurring in $J$ such that: \textsl{(i)} it is the identity on constants,
  and \textsl{(ii)} $h(I) \subseteq J$, where for a tuple $t= r(c_1, \ldots,
  c_n)$ we denote $h(t) = r(h(c_1), \ldots, h(c_n))$, and for a set of tuples
  $I$, $h(I) = \{ h(t) \,\mid\, t \in I\}$.  An \emph{isomorphism} is a
  surjective and injective homomorphism.
\end{definition}

If there is a homomorphism from an instance $I$ to an instance $J$, we write
$I \homo J$.  We use the notation $I \isom J$ to denote that there is an
isomorphism from $I$ to $J$ (and therefore another one from $J$ to $I$).

\subsection{Schema mappings}

In the relational-to-relational data exchange framework~\cite{FaKP05}, a
schema mapping (also called a data exchange setting) is defined as follows.
In the following, we will assume that both souce-to-target TGDs and target
TGDs are LAV TGDs.

\begin{definition}
  A schema mapping is a 4-tuple: $M=(\Sch, \Scht, \Sigma_{st}, \Sigma_t)$,
  where:
  \begin{plist}\itemsep-\parsep
\item $\Sch$ is a source schema,
\item $\Scht$ is a target schema,
\item $\Sigma_{st}$ is a finite set of \emph{s-t} (source-to-target) LAV TGDs
  $\ftgd{\x}{\phi(\x)}{\y}{\chi(\x, \y)}$ where $\phi(\x)$ is a conjunction of
  atomic formulas over $\Sch$ and $\chi(\x, \y)$ is a conjunction of atomic
  formulas over $\Scht$, and
\item $\Sigma_t$ is a finite set of LAV target TGDs over $\Scht$.
\end{plist}
\end{definition}

Given an instance $I$ of $\Sch$, a solution for $I$ under $M$ is an instance
$J$ of $\Scht$ such that $(I, J)$ satisfies $\Sigma_{st}$ and $J$ satisfies
$\Sigma_t$.  A solution in general has values in $\dom \cup \ndom$ (nulls or
constants).
In general, there are many possible solutions, possibly an infinite number,
for $I$ under a schema mapping $M$.  A solution $J$ is \emph{universal} if
there is a homomorphism from $J$ to every other solution for $I$ under
$M$~\cite{FKMP05}.

In~\cite{FKMP05} it was shown that a universal solution of $I$ under $M$
certainly exists when the dependencies in $\Sigma_t$ are either EGDs or
\emph{weakly-acyclic} TGDs, a class of TGDs which admits limited forms of
cycles.  In this case, if a solution exists, a universal solution can be
computed % in polynomial time
by applying the \emph{chase procedure}~\cite{MaMS79} to $I$ using $\Sigma_{st}
\cup \Sigma_t$.  The chase\footnote{With \emph{chase}, we refer
  interchangeably to the procedure or to its resulting instance.} is a
fundamental tool that has been widely used to investigate several database
problems such as, checking implication of constraints, checking equivalence of
queries, query optimisation, and computing certain answers in data integration
settings (see, e.g.,~\cite{GrMe99}).  This procedure takes as input an
instance $D$ and generates another instance by iteratively applying
\emph{chase steps} based on the given
dependencies.  % We consider two kinds of chase steps: \textit{(1)}
In particular, a TGD $\ftgd{\x}{\phi(\x)}{\y}{\psi(\x, \y)}$ can be applied at
step $k$ of the chase, to the partial chase $D_{k-1}$ obtained at the previous
step, if there is a homomorphism $h$ from $\phi(\x)$ to $D_{k-1}$; in this
case, the result of its application is $D_k = D_{k-1} \cup h'(\psi(\x, \y))$,
where $h'$ is the extension of $h$ to $\y$ obtained by assigning fresh
labelled nulls to the variables in $\y$.
%
% \textit{(2)} an EGD $\phi(x) \rightarrow (x_1 = x_2)$ can be applied to $I$
% if there is a homomorphism $h$ from $\phi(\x)$ to $I$ such that $h(x_1) \neq
% h(x_2)$; in this case, the result of its application is the following: if
% one of $h(x_1)$ and $h(x_2)$ is a constant and the other is a variable then
% the variable is changed to the constant, otherwise the values are equated
% unless they are both constants, since in this case the process \emph{fails}.
%
The chase of $I$ with respect to a set of dependencies $\Sigma$, denoted by
$\chase_\Sigma(I)$, is the instance obtained by applying all applicable chase
steps exhaustively to~$I$.  Notice that such instance may be infinite.

%%%

Universal solutions are particularly important also for query answering since
a conjunctive query\footnote{We shall henceforth consider conjunctive queries
  (a.k.a.~select-project-join queries) only, for which we refer the reader
  to~\cite{AbHV95}.} $q$ over the target scheme can be evaluated against any
universal solution.  More precisely, given a schema mapping $M=(\Sch, \Scht,
\Sigma_{st}, \Sigma_t)$, an instance $I$ of $\Sch$, and a conjunctive query
$q$ over $\Scht$, the \emph{certain answers} of $q$ on $I$ under $M$ is the
set of all tuples of constants occurring in every solution for $I$ under $M$.
Fagin et al.~\cite{FKMP05} have shown that if $q$ is a union of conjunctive
queries, the certain answers of $q$ on $I$ under $M$, denoted $\cert{q}{I}{M}$
can be obtained by evaluating $q$ over any universal solution $J$ for $I$
under $M$ and then eliminating all the tuples with nulls (in symbols:
$\cert{q}{I}{M} = q^\downarrow(J)$).

% In general, there may exist multiple, different universal solutions for a
% data exchange problem. However, in~\cite{FaKP05} it has been shown that
% there is one particular universal solution, called the \emph{core}, which is
% the most compact one and Gottlob and Nash~\cite{GoNa06} have shown that the
% core of a universal solution of a data exchange problem whose
% source-to-target constraints are tgds and whose target constraints consist
% of egds and weakly acyclic tgds can be computed in polynomial time.

% The problem is that, in general, there may exist multiple, differentb
% universal solutions for a data exchange problem. However, in~\cite{FaKP05}
% it has been shown that there is one particular universal solution, called
% the \emph{core}, which is the most compact one. More specifically, the core
% of a universal solution $I$ is (up to isomorphism) the smallest subset $J$
% of $I$ such that $J$ is isomorphically equivalent to $I$. Gottlob and
% Nash~\cite{GoNa06} have shown that the core of a universal solution of a
% data exchange problem whose source-to-target constraints are tgds and whose
% target constraints consist of egds and weakly acyclic tgds can be computed
% in polynomial time.

\section{Schema Mapping Containment and Equivalence}
\label{sec:containment}

\subsection{Definitions and preliminary results}

In this section we consider two schema mappings $M=(\Sch, \Scht, \Sigma_{st},
\Sigma_t)$ and $M'=(\Sch, \Scht, \Sigma'_{st}, \Sigma'_t)$.
% assuming that they both have solutions.
Also, we only consider conjunctive queries. Our notion of containment refers
to the behaviour of schema mapping with respect to query answering.

% \begin{definition}[Containment and equivalence of schema
%   mappings] \label{def:containment} We say that $M$ is contained in $M'$, in
%   symbols $M\subseteq M'$, if for every instance $I$ of $\Sch$ and every
%   solution $J$ for $I$ under $M$ we have that $J$ is also a solution for $I$
%   under $M'$. $M$ and $M'$ are equivalent, in symbols $M\equiv M'$, if both
%   $M\subseteq M'$ and $M'\subseteq M$.
%\end{definition}

\begin{definition}[Containment and equivalence of schema
  mappings~\cite{FKNP08}] \label{def:containment} We say that $M$ is contained
  in $M'$, in symbols $M\subseteq M'$, if for every instance $I$ of $\Sch$ and
  every query $q$ over $\Scht$ we have that the certain answers of $q$ on $I$
  under $M$ are contained in the certain answers of $q$ on $I$ under $M'$, in
  symbols
  \[
  \forall I \forall q\ \cert{q}{I}{M} \subseteq \cert{q}{I}{M'}
  \]
  $M$ and $M'$ are equivalent, in symbols $M \equiv M'$, if both $M\subseteq
  M'$ and $M'\subseteq M$.
\end{definition}

The following preliminary result on schema mapping containment states that:
\textsl{(i)} we can focus on universal solutions only and \textsl{(ii)} a
necessary and sufficient condition for containment relies on the existence of
a homomorphism between universal solutions.

\begin{lemma}%[straightforwarf from~\cite{FKNP08}] 
  \label{lem:univ} Given two schema mappings $M,M'$ as above, we have that $M
  \subseteq M'$ if and only if, for every instance $I$ of $\Sch$, there exist
  two universal solutions $J, J'$ for $I$ under $M$ and $M'$ respectively,
  such that $J \homo J'$.
\end{lemma}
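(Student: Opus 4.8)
The plan is to prove the two directions of the biconditional, exploiting the characterisation $\cert{q}{I}{M} = q^\downarrow(J)$ for any universal solution $J$, together with the folklore fact that homomorphisms preserve conjunctive-query answers (if $J \homo J'$ then $q(J) \subseteq q(J')$, and since $h$ is the identity on constants, $q^\downarrow(J) \subseteq q^\downarrow(J')$).

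For the ``if'' direction, suppose that for every source instance $I$ there are universal solutions $J$ for $I$ under $M$ and $J'$ for $I$ under $M'$ with $J \homo J'$. Fix any $I$ and any conjunctive query $q$ over $\Scht$. Then $\cert{q}{I}{M} = q^\downarrow(J) \subseteq q^\downarrow(J') = \cert{q}{I}{M'}$, the middle inclusion being the homomorphism-preservation property above. Since $I$ and $q$ were arbitrary, $M \subseteq M'$.

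For the ``only if'' direction, assume $M \subseteq M'$ and fix an arbitrary source instance $I$. First I would argue that $I$ admits a universal solution under both $M$ and $M'$: under LAV target TGDs the chase of $I$ with $\Sigma_{st} \cup \Sigma_t$ always proceeds without failure (there are no EGDs to violate), and its result — possibly infinite — is a universal solution, by the standard argument of~\cite{FKMP05} extended to the infinite case. Call these $J$ and $J'$ respectively. It remains to produce a homomorphism $J \homo J'$. The idea is to encode the atoms of $J$ as a Boolean conjunctive query: let $q_J$ be the query whose body is the conjunction of all atoms of $J$, with the nulls of $J$ treated as existentially quantified variables and the constants of $J$ kept as constants (when $J$ is infinite one takes a limit / compactness argument over its finite subinstances, or works with the canonical query of each finite subinstance). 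Evaluating $q_J$ over $J$ itself gives a non-null tuple (the tuple of constants of $J$), so that tuple lies in $\cert{q_J}{I}{M}$; by the containment hypothesis it lies in $\cert{q_J}{I}{M'} = q_J^\downarrow(J')$, which by definition of query evaluation means exactly that there is a homomorphism from the body of $q_J$ — i.e.\ from $J$, with constants fixed — into $J'$. That homomorphism is the desired $J \homo J'$.

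The main obstacle is the handling of \emph{infinite} universal solutions, which genuinely arise here because LAV TGDs have existential variables and the chase need not terminate. The ``canonical query of $J$'' trick is immediate when $J$ is finite, but for infinite $J$ one cannot take the whole instance as a single finite conjunctive query. The clean fix is a compactness/König's-lemma argument: for every finite subinstance $J_0 \subseteq J$ the finite query $q_{J_0}$ witnesses a homomorphism $h_{J_0}\colon J_0 \to J'$ (via containment applied to $q_{J_0}$), and since $J'$ is built from a fixed finite signature these partial homomorphisms can be amalgamated — using that $J$ is countable and an enumeration-plus-diagonalisation over a finitely-branching tree of consistent partial maps — into a single homomorphism $h\colon J \to J'$. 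One must also check that evaluating the (Boolean) query $q_J$ on $J$ really does return a distinguished tuple of constants that survives the $(\cdot)^\downarrow$ operator; picking $q_J$ to additionally output, say, a fixed constant appearing in $I$ (or making it genuinely Boolean and reading ``non-empty answer'' as the certain fact) takes care of this bookkeeping. With the infinite case dispatched, the rest is the routine symmetric argument sketched above.
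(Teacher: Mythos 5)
Your proof follows essentially the same route as the paper's: the ``if'' direction is the identical homomorphism-preservation argument ($t \in q^\downarrow(J)$ plus $J \homo J'$ gives $t \in q^\downarrow(J')$), and the ``only if'' direction is the same canonical-query trick — turn the chase into a Boolean conjunctive query, observe it is certainly true under $M$, invoke $M \subseteq M'$ to map it into $J'$ — except that the paper applies it to the finite partial chases $\chase^{(k)}_{\Sigma}(I)$ by induction on chase steps, whereas you apply it to arbitrary finite subinstances $J_0 \subseteq J$ and then amalgamate. So this is not a different method, only a different bookkeeping of the limit.

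The one substantive issue is precisely that limit step. Your justification — ``since $J'$ is built from a fixed finite signature these partial homomorphisms can be amalgamated \ldots over a finitely-branching tree of consistent partial maps'' — does not hold as stated: a finite signature bounds the arities, not the domain of $J'$, and when $J'$ is infinite a single null of $J$ has infinitely many candidate images, so the tree of consistent partial homomorphisms is \emph{not} finitely branching and K\"onig's lemma does not apply. Indeed, compactness for homomorphisms into an \emph{infinite} target fails in general (every finite subinstance of an infinite directed path maps into the disjoint union of all finite directed paths, but the whole path does not), so some additional structural property of the chase of LAV TGDs must be exploited to pass to the limit. To be fair, the paper's own proof sketch has exactly the same soft spot: it derives, for each $k$, \emph{some} homomorphism $\chase^{(k)}_{\Sigma}(I) \homo J'$ (not necessarily extending the previous one) and then asserts ``with this inductive argument we show $J \homo J'$'' without justifying the passage to the infinite union. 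So you have correctly identified where the real difficulty lies — the paper glosses it — but the specific fix you propose is not yet a proof; you would need either homomorphisms that are coherent along the chain of partial chases, or an amalgamation argument that uses the level structure of the LAV chase (e.g.\ mapping into bounded-level initial segments of $J'$) rather than bare K\"onig's lemma. Everything else in your write-up (existence of the possibly infinite universal chase solutions, the handling of constants versus nulls, the reduction of certainty to a homomorphism into $J'$) matches the paper's argument.
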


\begin{proof} 
% Without loss of generality, we consider Boolean conjunctive
%   queries only.\\
  \textit{\underline{(Only if).}}
  % First, observe that if there exist two universal solutions $J,J'$ for a
  % certain instance $I$ under $M,M'$ respectively such that $J \homo J'$,
  % then for every two universal solutions $J,J'$ for $I$ under $M,M'$ we have
  % $J \homo J'$.
  By contradiction, assume that $M \subseteq M'$, but there is no two
  universal solutions $J,J'$ for some instance $I$ under $M,M'$ respectively.
  Let $J = \chase_\Sigma(D)$ an $J' = \chase_{\Sigma'}(D)$.  Now, we proceed
  by induction on the number of applications of the chase step on $I$. 
  \textit{Base step.}  Take the Boolean conjunctive query obtained by
  replacing every null in $I$ with a distinct variable.  Since $M \subseteq
  M'$, not only $J \models q$ (i.e., $q$ has positive answer on $I$)
  trivially, but also by hypothesis we also have $J' \models q$.
  \textit{Inductive step.}  Assume that at the $k$-th application, the partial
  chase, denoted $\chase^{(k)}_\Sigma(D)$ maps onto $J'$ via some
  homomorphism.  Consider $\chase^{(k+1)}_\Sigma(D)$ and turn it into a query
  as above.  Similarly, we get $\chase^{(k+1)}_\Sigma(D) \homo J'$.  With this
  inductive argument we show $J \homo J'$.
  \textit{\underline{(If).}} Let $J, J'$ be universal solutions for an instance
  $I$ under $M, M'$ respectively, with $J \homo J'$.  Given a query $q$, if
  for a tuple $t$ if holds $t \in \cert{q}{I}{M}$, this amounts to say $t \in
  q^\downarrow(J)$, or equivalently there is a homomorphism from $q$ to $J$
  that sends the head variables of $q$ to constants.  Since there is also
  another homomorphism from $J$ to $J'$, we have also $t \in q^\downarrow(J)$
  and therefore $t \in \cert{q}{I}{M'}$.
%
  % Let $J$ and $J'$ be two universal solutions for $I$ under $M$ and $M'$
  % respectively.  Since the certain answers of any query $q$ on $I$ under $M$
  % and $M'$ can be evaluated against $J$ and $J'$ respectively, by hypothesis
  % we have that $q^\downarrow(J) \subseteq q^\downarrow(J')$. Since a
  % homomorphism is the identity on constants, it easily follows that
  % $J \homo J'$.\\
  % (If) By hypothesis we have two universal solutions $J, J'$ for $M$ and
  % $M'$ respectively, with $J \isom J'$. Given a source instance $I$ and a
  % query $q$,
  % %
  % Let $J$ be a universal for $I$ under $M$ and let $t$ be a tuple of
  % constants in $q^\downarrow(J)$.  By the query answering property of
  % universal solutions we have that $t$ is a certain answer of $q$ on $I$
  % under $M'$.  Now, let $J'$ be a universal for $I$ under $M$ and let $h$ be
  % the homomorphism from $J$ to $J'$.  Since any homomorphism is the identity
  % on constants we have that $t$ also occurs in $q^\downarrow(h(J))$, which
  % is contained in $q^\downarrow(J')$.  It follows that $t$ is also a certain
  % answer of $q$ on $I$ under $M'$.
\end{proof}

The next result easily follows from the lemma above and the results on the
generation of universal solutions for data exchange~\cite{FKMP05}.  We shall
make use of the notion of chase on the union of source-to-target and target
TGDs; in this case, we implicitly assume that the schema is the union of the
source and the target schema.

\begin{lemma}[straightforward from~\cite{FKNP08}] \label{lem:chase} Given two
  schema mappings $M,M'$ as above, we have $M\subseteq M'$ if and only for
  every instance $I$ of $\Sch$, $\chase_{\Sigma_{st}\cup\Sigma_t}(I) \homo
  \chase_{\Sigma'_{st}\cup\Sigma'_t}(I)$.
\end{lemma}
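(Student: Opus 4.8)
The plan is to derive this result directly from Lemma~\ref{lem:univ} together with the characterisation of universal solutions via the chase established in~\cite{FKMP05}. The key observation is that Lemma~\ref{lem:univ} gives a criterion ``there \emph{exist} universal solutions $J,J'$ with $J \homo J'$'', and we want to upgrade this to a statement about the \emph{specific} universal solutions produced by the chase, namely $\chase_{\Sigma_{st}\cup\Sigma_t}(I)$ and $\chase_{\Sigma'_{st}\cup\Sigma'_t}(I)$. This upgrade is possible because any two universal solutions for the same instance under the same mapping are homomorphically equivalent.

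First I would recall that, by the results of~\cite{FKMP05} (invoked already in the Preliminaries), for a schema mapping $M=(\Sch,\Scht,\Sigma_{st},\Sigma_t)$ and any source instance $I$, the instance $\chase_{\Sigma_{st}\cup\Sigma_t}(I)$ — viewing the schema as the union of source and target, so that the s-t TGDs act as ordinary TGDs — is a universal solution for $I$ under $M$ (restricted to the target relations, with $I$ already present on the source side). In particular it admits a homomorphism to every other solution, and conversely every universal solution admits a homomorphism to it; hence all universal solutions are pairwise homomorphically equivalent. I would then argue the two directions. For the ``if'' direction: suppose $\chase_{\Sigma_{st}\cup\Sigma_t}(I) \homo \chase_{\Sigma'_{st}\cup\Sigma'_t}(I)$ for every $I$. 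Since these chases are universal solutions for $I$ under $M$ and $M'$ respectively, Lemma~\ref{lem:univ} (``if'' part) immediately yields $M \subseteq M'$. For the ``only if'' direction: suppose $M \subseteq M'$. By Lemma~\ref{lem:univ} there exist, for each $I$, universal solutions $J$ for $I$ under $M$ and $J'$ for $I$ under $M'$ with $J \homo J'$. Now $\chase_{\Sigma_{st}\cup\Sigma_t}(I)$ is a universal solution under $M$, so it maps into $J$; and $J'$ maps into $\chase_{\Sigma'_{st}\cup\Sigma'_t}(I)$ since the latter is universal under $M'$. Composing, $\chase_{\Sigma_{st}\cup\Sigma_t}(I) \homo J \homo J' \homo \chase_{\Sigma'_{st}\cup\Sigma'_t}(I)$, which gives the claim.

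The main subtlety — and the step I would flag as requiring care rather than being routine — is that, under LAV TGDs in $\Sigma_t$ and $\Sigma'_t$, the chase need not terminate, so $\chase_{\Sigma_{st}\cup\Sigma_t}(I)$ may be an infinite instance. One must therefore be slightly careful: the ``universal solution'' of~\cite{FKMP05} is stated for terminating chases (weakly acyclic target TGDs), so I would either invoke the standard extension of the universality property to the (possibly infinite) result of the oblivious/fair chase, or note that homomorphisms into and out of infinite instances behave exactly as in the finite case for this argument (the composition of homomorphisms is still a homomorphism, and the chase result still maps homomorphically into every solution by the usual fixpoint argument). Since Lemma~\ref{lem:univ} itself is already proved by induction on chase steps and thus already commits to reasoning with possibly infinite chases, this is consistent with the surrounding development and does not introduce a new obstacle. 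Everything else is a direct composition-of-homomorphisms argument.
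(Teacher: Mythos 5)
Your proof is correct and follows essentially the same route as the paper: the paper's (one-line) proof likewise derives the lemma from Lemma~\ref{lem:univ} together with the fact from~\cite{FKMP05} that $\chase_{\Sigma_{st}\cup\Sigma_t}(I)$ and $\chase_{\Sigma'_{st}\cup\Sigma'_t}(I)$ are universal solutions for $I$ under $M$ and $M'$. Your write-up merely spells out the details the paper leaves implicit (the pairwise homomorphic equivalence of universal solutions and the composition of homomorphisms), and your remark about the possibly infinite chase under LAV TGDs is a fair note on a point the paper glosses over.
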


\begin{proof}
  It follows by Lemma~\ref{lem:univ} and the fact that
  $J=\chase_{\Sigma_{st}\cup\Sigma_t}(I)$ is a universal solution for $I$
  under $M$ and $J'=\chase_{\Sigma'_{st}\cup\Sigma'_t}(I)$ is a universal
  solution for $I$ under $M'$~\cite{FKMP05}.
\end{proof}

\begin{example} \label{ex:first} Let us consider the following schema mapping
  $M=(\Sch, \Scht, \Sigma_{st}, \Sigma_t)$:
\[
\begin{array}{rrcl}
  M= \{ & \Sch   & = & \{r_1(A,B)\} ,\ \ \Scht  = \{r_2(C,D),R_3(M,F)\} \\
  & \Sigma_{st} & = & \{ \tgd{r_1(X,Y)}{\exists Z\,r_2(Y,Z)}, 
  \tgd{r_1(X,Y)}{\exists Z\, r_3(Z,Y)} \} \\
  & \Sigma_t & = & \{ \tgd{r_2(X,Y)}{\exists Z\, r_3(Z,X)} \} \ \ \ \}\\
\end{array}
\]
and the generic database instance $I=\{r_1(a_i,b_i)\}$ $(1\leq i\leq n)$ of
$\Sch$.  The universal solution generated by applying the chase process to $I$
using the given dependencies is: $J = \{r_2(b_i,v_i), r_3(v'_i,b_i),
r_3(v''_i,b_i)\}$ $(1\leq i\leq n)$, where the $v_i$ are labelled nulls.
Let us now consider the following schema mapping defined over the same source
and target schemas:
\[
\begin{array}{rrcl}
  M'= \{ & \Sch   & = & \{r_1(A,B)\} ,\ \ \Scht  = \{r_2(C,D),r_3(M,F)\} \\
  & \Sigma'_{st} & = & \{ \tgd{r_1(X,Y)}{\exists Z\, r_2(Y,Z)} \} \\
  & \Sigma'_t & = & \{ \tgd{r_2(X,Y)}{\exists Z\, r_3(Z,X)} \} \ \ \ \}\\
\end{array}
\]
We have that the universal solution generated by applying the chase process to
$I$ using $\Sigma'_{st}\cup\Sigma'_t$ is: $J' = \{r_2(b_i,v_i),
r_3(v'_i,b_i)\}$ $(1\leq i\leq n)$.  It is easy to see that $J'$ and $J$ are
homomorphically equivalent, i.e., $J \homo J'$ and $J' \homo J$.  This is
actually true for every instance of $\Sch$ and so $M \equiv M'$.  Notice that
$M'$ is ``more compact'' than $M$ and so the chase using the dependencies in
$M'$ requires a lower number of steps (we do not discuss on optimality
criteria related to containment in this paper).
%\footnote{Almeno sembra.}

Finally, consider a schema mapping $M''$ as follows:
\[
\begin{array}{rrcl}
M''= \{ & \Sch   & = & \{r_1(A,B)\} ,\ \ \Scht  = \{r_2(C,D),r_3(E,F)\} \\
& \Sigma''_{st} & = & \{ \tgd{r_1(X,Y)}{\exists Z\, r_2(Y,Z)} \} \\
& \Sigma''_t & = & \{ \tgd{r_2(X,Y)}{\exists Z\, r_3(Y,X)} \} \ \ \ \}\\
\end{array}
\]
By applying the chase process to $I$ using these dependencies we obtain the
target instance: $J'' = \{r_2(b_i,v_i), r_3(v_i,b_i)\}$ $(1\leq i\leq n)$.
Now, we have $J' \homo J''$ ($v'_i\mapsto v_i$) but there is no homomorphism
from $J''$ to $J'$.  Again, this is valid in general and so $M''\subseteq M'$
but $M' \not\subseteq M''$.
\end{example}
Clearly, the test of Lemma~\ref{lem:chase} is not usable in practice since it
refers to an unbounded number of source instances, and moreover of unbounded
size in general.

\subsection{A practical test for containment}

We now present a technique for deciding containment between two schema
mappings.  We are able to show that given a schema mapping $M = (\Sch, \Scht,
\Sigma_{st}, \Sigma_t)$, in order to test containment, it suffices to test it
on a \emph{finite} set of instances for $\Sch$, depending only on $M$, that we
call \emph{dummy instances} for $M$, denoted $D_M$.
\begin{definition}
  The set of \emph{dummy databases} $D_M$ for a schema mapping $M=(\Sch,
  \Scht, \Sigma_{st}, \Sigma_t)$ is constructed as follows.  For every
  relational symbol $r$ of arity $n$ appearing in the body of some TGD in
  $\Sigma_{st}$, the corresponding instances $D_r$ are obtained by considering
  all possible tuples of the form $r(z_1, \ldots, z_n)$,
  % modulo renaming of the $z_i$,
  where the $z_i$ are freshly invented constants in $\dom$, possibly with
  repetitions.
\end{definition}

\begin{example} \label{ex:second} Consider again $M=(\Sch, \Scht, \Sigma_{st},
  \Sigma_t)$ of Example~\ref{ex:first}.  Then, the dummy set $D_M$ contains
  four single-tuple instances: $\{r_1(z_1,z_2)\}$, $\{r_1(z_1,z_1)\}$,
  $\{r(z_2,z_2)\}$ and $\{r(z_2,z_1)\}$.  We do not consider $\{r(z_2,z_2)\}$
  or $\{r(z_2,z_1)\}$ (though it would not harm the decision algorithm that we
  are going to introduce in the following) because they can be obtained from
  the first two instances by null renaming.
\end{example}

We now show the main property of the dummy instances, showing that they are a
useful tool for checking containment of schema mappings.  Before stating this
paper's main theorem, we need some intermediate results.

% \begin{lemma}\label{lem:chase-hom}
%   Consider two instances $D_1, D_2$ constructed with fresh and non-fresh
%   constants, and a set $\Sigma$ of LAV TGDs.  If there exists a homomorphism
%   $h_1: D_1 \rightarrow D_2$, then there exists another homomorphism $h_2:
%   \chase_\Sigma(D_1) \rightarrow \chase_\Sigma(D_2)$.
% \end{lemma}

% \begin{proof}\mbox{} \textsl{(sketch)}.
%   The proof is by induction on the number of applications of the tgd chase
%   rule on $D_1$.  In the base case, before any application of the rule, the
%   homomorphism $h_1$ itself testifies that the property is true.  In the
%   inductive case, we assume by induction hypothesis that there is a
%   homomorphism $h_k$ from $\chase^k_\Sigma(D_1)$ to $\chase_\Sigma(D_2)$,
%   where $\chase^k_\Sigma(D_1)$ is what we get from $D_1$ and $\Sigma$ after
%   applying $k$ times the tgd chase rule.  At the $(k+1)$-th application of
%   the tgd chase rule, we map the body of some tgd $\sigma$ to
%   $\chase^k_\Sigma(D_1)$ with a homomorphism $h_\sigma$.  Now, it is easy to
%   see that the atoms added to $\chase^k_\Sigma(D_1)$ can be mapped to
%   $\chase_\Sigma(D_2)$, because $\chase_\Sigma(D_2)$ satisfies all the tgds
%   in $\Sigma$.  This allows us to extend $h_k$ to a homomorphism $h_{k+1}$
%   that maps $\chase^{k+1}_\Sigma(D_1)$ to $\chase_\Sigma(D_2)$.
% \end{proof}

\begin{lemma} \label{lem:chase-union} Consider two instances $D_1, D_2$
  constructed with fresh and non-fresh constants, i.e., with values in $\dom
  \cup \ndom$, and a set $\Sigma$ of LAV TGDs.  We then have
  \[
  \chase_\Sigma(D_1 \cup D_2) \isom \chase_\Sigma(D_1) \cup \chase_\Sigma(D_2)
  \]
  % modulo renaming of the fresh labelled nulls introduced in the chase
  Without loss of generality, we assume that the two sets of nulls generated
  in $\chase_\Sigma(D_1)$ and $\chase_\Sigma(D_2)$ have empty intersection.
\end{lemma}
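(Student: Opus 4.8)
The plan is to establish an isomorphism between $\chase_\Sigma(D_1 \cup D_2)$ and $\chase_\Sigma(D_1) \cup \chase_\Sigma(D_2)$ by exploiting the defining feature of LAV TGDs: every TGD in $\Sigma$ has a \emph{single} atom in its body. Consequently, whenever a chase step fires, the homomorphism $h$ from the body $\phi(\x)$ into the current partial chase has an image consisting of exactly one tuple, and that tuple must have been produced entirely within the ``lineage'' of either $D_1$ or $D_2$ — it cannot mix nulls or constants from both. This is precisely why the single-atom restriction matters: with multi-atom bodies a chase step could match atoms straddling the two components, creating tuples that belong to neither $\chase_\Sigma(D_1)$ nor $\chase_\Sigma(D_2)$.

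First I would make the inclusion $\chase_\Sigma(D_1) \cup \chase_\Sigma(D_2) \subseteq \chase_\Sigma(D_1 \cup D_2)$ (up to renaming of nulls) almost immediate: since $D_i \subseteq D_1 \cup D_2$, any chase step applicable in the chase of $D_i$ is also applicable in the chase of $D_1 \cup D_2$, so by induction on the length of the chase sequence every tuple of $\chase_\Sigma(D_i)$ appears in $\chase_\Sigma(D_1 \cup D_2)$, with the freshly invented nulls renamed consistently. Here we use the assumption stated at the end of the lemma, namely that the null sets generated for $D_1$ and $D_2$ are disjoint, so the two embeddings do not clash. For the converse inclusion, I would argue by induction on the construction of $\chase_\Sigma(D_1 \cup D_2)$: define a ``colour'' of each tuple as $1$ or $2$ according to whether it originates (transitively, via the chase steps that produced it) from $D_1$ or from $D_2$. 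Atoms of $D_1$ get colour $1$, atoms of $D_2$ get colour $2$; when a LAV TGD fires on a single matched tuple of colour $i$, the newly generated atoms inherit colour $i$, and the fresh nulls are ``$i$-nulls.'' One then checks that the colour-$i$ fragment of $\chase_\Sigma(D_1 \cup D_2)$ is, up to the null renaming fixed above, exactly $\chase_\Sigma(D_i)$: every chase step inside the colour-$i$ fragment corresponds to a chase step of $\chase_\Sigma(D_i)$ and vice versa, because the applicability of a LAV TGD depends only on the existence of a single matching tuple, which lives entirely within one colour class.

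Putting the two directions together yields a bijection between the tuples (and hence between the nulls) of the two instances that is the identity on constants, i.e., an isomorphism in the sense of the paper's definition. The main obstacle I anticipate is making the notion of ``colour'' or ``lineage'' precise enough to support the induction cleanly: one must be careful that a single tuple is never assigned two colours, which requires observing that since $D_1$ and $D_2$ are built over disjoint null sets and the freshly invented constants are distinct, no colour-$1$ tuple can ever coincide syntactically with a colour-$2$ tuple, and no chase step can match a tuple whose arguments mix $1$-values and $2$-values (again because the body is a single atom, so there is only one tuple to match and it already carries a well-defined colour). Once this well-definedness is nailed down, the rest is a routine double induction on chase length.
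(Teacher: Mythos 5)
Your proposal is correct and rests on exactly the same key observation as the paper's (sketched) proof: since a LAV TGD has a single body atom, each chase step depends solely on one tuple, so the chase of $D_1 \cup D_2$ decomposes into independent fragments generated from $D_1$ and from $D_2$ separately. Your colouring/lineage induction is simply a more detailed write-up of that same argument, so no genuinely different route is taken.
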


\begin{proof} \textit{(sketch).}  The result is proved by observing that every
  tuple $t$ in $D_1$ generates a fragment of the chase $\chase_\Sigma(\{t\})$
  which is totally independent of any other tuple in $D_1$ (the same holds, of
  course, for tuples in $D_2$).  This because the application of a LAV TGD
  depends, by definition, \emph{solely} on the single tuple to which it is
  applied, independently of the others.
\end{proof}

From the above result we immediately get the following.

\begin{lemma} \label{lem:chase-union-2} Consider two instances $D_1$ and
  $D_2$, and two sets of LAV TGDs $\Sigma_1$ and $\Sigma_2$.  We have that
  \textit{(1)} and \textit{(2)} iff \textit{(3)}, where \textit{(1), (2), (3)}
  are as follows.
%  \begin{plist}
   \textit{(1)} $\chase_{\Sigma_1}(D_1) \homo \chase_{\Sigma_2}(D_1)$;
   \textit{(2)} $\chase_{\Sigma_1}(D_2) \homo \chase_{\Sigma_2}(D_2)$;
   \textit{(3)} $\chase_{\Sigma_1}(D_1 \cup D_2) \homo \chase_{\Sigma_2}(D_1 \cup
    D_2)$.
%  \end{plist}
\end{lemma}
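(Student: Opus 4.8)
The plan is to derive Lemma~\ref{lem:chase-union-2} as a direct corollary of Lemma~\ref{lem:chase-union}, using the fact that homomorphisms interact well with unions of instances over disjoint sets of nulls. First I would observe that, by Lemma~\ref{lem:chase-union}, we may replace $\chase_{\Sigma_i}(D_1 \cup D_2)$ by $\chase_{\Sigma_i}(D_1) \cup \chase_{\Sigma_i}(D_2)$ (for $i=1,2$) up to isomorphism, and since homomorphic equivalence is preserved under isomorphic substitution of either side, condition \textit{(3)} is equivalent to
\[
\chase_{\Sigma_1}(D_1) \cup \chase_{\Sigma_1}(D_2) \homo \chase_{\Sigma_2}(D_1) \cup \chase_{\Sigma_2}(D_2).
\]
Here I would invoke the disjointness convention from Lemma~\ref{lem:chase-union}: the nulls generated in $\chase_{\Sigma_i}(D_1)$ and $\chase_{\Sigma_i}(D_2)$ are disjoint, and moreover the nulls introduced by $\Sigma_1$ versus $\Sigma_2$ can be taken disjoint as well (a further harmless renaming).

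For the direction \textit{(1)} and \textit{(2)} $\Rightarrow$ \textit{(3)}, I would simply take the union of the two given homomorphisms. Because the domains (constants plus nulls) of $\chase_{\Sigma_1}(D_1)$ and $\chase_{\Sigma_1}(D_2)$ overlap only on constants — on which every homomorphism is the identity — the two maps $h_1 : \chase_{\Sigma_1}(D_1) \homo \chase_{\Sigma_2}(D_1)$ and $h_2 : \chase_{\Sigma_1}(D_2) \homo \chase_{\Sigma_2}(D_2)$ agree on the intersection of their domains, so $h_1 \cup h_2$ is a well-defined function. It is the identity on constants and maps each tuple of $\chase_{\Sigma_1}(D_1) \cup \chase_{\Sigma_1}(D_2)$ into $\chase_{\Sigma_2}(D_1) \cup \chase_{\Sigma_2}(D_2)$, hence it is the required homomorphism for \textit{(3)}.

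For the converse, \textit{(3)} $\Rightarrow$ \textit{(1)} (the case of \textit{(2)} being symmetric), I would start from a homomorphism $h : \chase_{\Sigma_1}(D_1) \cup \chase_{\Sigma_1}(D_2) \homo \chase_{\Sigma_2}(D_1) \cup \chase_{\Sigma_2}(D_2)$ and argue that its restriction to $\chase_{\Sigma_1}(D_1)$ actually lands inside $\chase_{\Sigma_2}(D_1)$. The key point is that $\chase_{\Sigma_1}(D_1)$ and $\chase_{\Sigma_2}(D_1)$ are built from exactly the constants appearing in $D_1$, which are disjoint from the constants of $D_2$ (the dummy constants are fresh per instance, and in the intended application they come from distinct dummy databases). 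A tuple of $\chase_{\Sigma_1}(D_1)$ contains only constants from $D_1$ and nulls generated from $D_1$; since $h$ fixes constants, its image tuple contains only constants from $D_1$, hence it cannot belong to the $D_2$-part $\chase_{\Sigma_2}(D_2)$, whose tuples involve constants from $D_2$ (or are null-only but still traceable — here one uses that, by Lemma~\ref{lem:chase-union}, the chase of $D_1 \cup D_2$ decomposes cleanly, so the $D_1$-fragment and $D_2$-fragment of $\chase_{\Sigma_2}$ are genuinely separate subinstances). Thus $h$ restricted to $\chase_{\Sigma_1}(D_1)$ witnesses \textit{(1)}.

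I expect the main obstacle to be making precise the claim that the $D_1$-fragment and the $D_2$-fragment of a chase are ``genuinely separate'' — in particular handling tuples that happen to consist entirely of nulls, for which one cannot read off the originating tuple just by inspecting constants. The clean way around this is to lean fully on Lemma~\ref{lem:chase-union}: it gives an isomorphism, not merely a homomorphism, so $\chase_{\Sigma_i}(D_1 \cup D_2)$ literally \emph{is} (up to renaming of nulls) the disjoint union of two subinstances with disjoint null sets and disjoint constant sets, and a homomorphism out of such a disjoint union restricts on each component. One should state explicitly the standing assumption — mirroring the one in Lemma~\ref{lem:chase-union} — that $D_1$ and $D_2$ share no constants, which is exactly the situation for distinct dummy databases; without it the restriction argument for the converse direction can fail.
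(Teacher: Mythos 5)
Your forward direction, \textit{(1)} and \textit{(2)} imply \textit{(3)}, matches what the paper intends: the paper gives no explicit proof and presents the lemma as following ``immediately'' from Lemma~\ref{lem:chase-union}, and gluing the two homomorphisms $h_1\cup h_2$ over disjoint null sets (both being the identity on constants) is exactly that immediate corollary. It is also the only direction that is actually used later, in the \textsl{(If)} part of the proof of Theorem~\ref{the:containment}, where per-tuple homomorphisms are united into a homomorphism between the decomposed chases.

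Your converse direction, however, has a genuine gap. You first add a hypothesis that is not in the statement (that $D_1$ and $D_2$ share no constants), and even under that hypothesis the restriction argument breaks on atoms — more generally, connected components — of $\chase_{\Sigma_1}(D_1)$ that contain no constants at all: a homomorphism on the union may legitimately send such atoms into the $D_2$-part of the target, and the isomorphism of Lemma~\ref{lem:chase-union} does not prevent this, because a homomorphism defined on a disjoint union restricts on each component to a homomorphism into the \emph{whole} target, not into the matching component. Concretely, take $D_1=\{r(a)\}$, $D_2=\{s(b)\}$, $\Sigma_1=\{r(X) \ra \exists Y\, p(Y)\}$, $\Sigma_2=\{s(X) \ra \exists Y\, p(Y)\}$ (legal LAV TGDs). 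Then \textit{(2)} holds and \textit{(3)} holds (the lone $p$-null maps to the $p$-null generated from $D_2$), yet \textit{(1)} fails, since $\chase_{\Sigma_1}(D_1)$ contains a $p$-atom while $\chase_{\Sigma_2}(D_1)$ contains none. So the ``only if'' half cannot be obtained by your restriction argument; it requires extra assumptions (e.g., that every atom of $\chase_{\Sigma_1}(D_1)$ lies in a connected component containing a constant of $D_1$, with $D_1$ and $D_2$ constant-disjoint), and in fact the biconditional as stated is questionable — a point worth flagging, though the paper itself only ever relies on the direction you did prove correctly.
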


We now come to our first main result.

\begin{theorem}\label{the:containment}
  Consider two schema mappings $M=(\Sch, \Scht, \Sigma_{st}, \Sigma_t)$ and
  $M'=(\Sch, \Scht, \Sigma'_{st}, \Sigma_t)$.  We have that $M \subseteq M'$
  if and only if, for every instance $D$ in the set of dummy instances $D_M$
  for $M$, we have $\chase_{\Sigma}(D) \homo \chase_{\Sigma'}(D)$, where
  $\Sigma = \Sigma_{st} \cup \Sigma_t$ and $\Sigma' = \Sigma'_{st} \cup
  \Sigma'_t$.
\end{theorem}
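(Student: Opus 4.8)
The plan is to build on Lemma~\ref{lem:chase}, which already reduces $M \subseteq M'$ to the condition that $\chase_{\Sigma}(I) \homo \chase_{\Sigma'}(I)$ for \emph{every} instance $I$ of $\Sch$, and then to show that this infinite family of homomorphism tests collapses to the finite subfamily indexed by the dummy instances in $D_M$. The $(\Rightarrow)$ direction is immediate: each $D \in D_M$ is itself a legitimate instance of $\Sch$ (its values are constants), so if $M \subseteq M'$ then Lemma~\ref{lem:chase} gives $\chase_\Sigma(D) \homo \chase_{\Sigma'}(D)$ as a special case. All the work is therefore in $(\Leftarrow)$.

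For $(\Leftarrow)$ I would fix an arbitrary $I = \{t_1, \dots, t_m\}$ and write it as the union of its single-tuple subinstances $\{t_1\}, \dots, \{t_m\}$. Iterating Lemma~\ref{lem:chase-union-2} (peeling off one tuple at a time) reduces the goal $\chase_\Sigma(I) \homo \chase_{\Sigma'}(I)$ to the single-tuple statements $\chase_\Sigma(\{t_j\}) \homo \chase_{\Sigma'}(\{t_j\})$ for $j = 1, \dots, m$, so it suffices to handle one tuple $t_j = r(c_1, \dots, c_n)$. If $r$ does not occur in the body of any s-t TGD, then no chase step can fire on $\{t_j\}$ (target TGDs act only on target tuples), so $\chase_\Sigma(\{t_j\}) = \{t_j\}$, and since $\{t_j\} \subseteq \chase_{\Sigma'}(\{t_j\})$ the identity is the desired homomorphism. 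Otherwise $r$ occurs in some s-t TGD body, so by construction $D_M$ contains a dummy instance $D = \{r(z_1, \dots, z_n)\}$ whose argument tuple has exactly the same equality pattern as $(c_1, \dots, c_n)$.

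The crux is then to transport the hypothesis $\chase_\Sigma(D) \homo \chase_{\Sigma'}(D)$ back to $\{t_j\}$, using the fact that for LAV TGDs the chase of a single tuple is determined, up to renaming of values, by the equality pattern of that tuple's arguments: a LAV TGD body is a single atom over (possibly repeated) variables with no constants, so whether a chase step applies and what it produces are insensitive both to the actual values in the tuple and to whether those values are constants or labelled nulls. Concretely, I would take the value-bijection $\rho$ that sends each $c_i$ to the corresponding $z_i$ (well defined precisely because the two equality patterns coincide) and is the identity elsewhere, extend it to a bijection $\hat\rho$ matching up the freshly invented nulls of the two chases so that $\hat\rho(\chase_\Sigma(\{t_j\})) = \chase_\Sigma(D)$ and $\hat\rho(\chase_{\Sigma'}(\{t_j\})) = \chase_{\Sigma'}(D)$, and then observe that if $h$ witnesses $\chase_\Sigma(D) \homo \chase_{\Sigma'}(D)$ then $\hat\rho^{-1} \circ h \circ \hat\rho$ is a homomorphism $\chase_\Sigma(\{t_j\}) \to \chase_{\Sigma'}(\{t_j\})$; it is the identity on constants because every constant of $\chase_\Sigma(\{t_j\})$ is some $c_i$, which $\hat\rho$ sends to the constant $z_i$, which $h$ fixes, and which $\hat\rho^{-1}$ sends back to $c_i$. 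This settles the single-tuple case, hence $\chase_\Sigma(I) \homo \chase_{\Sigma'}(I)$ for all $I$, hence $M \subseteq M'$ by Lemma~\ref{lem:chase}.

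I expect the main obstacle to be making this ``equality-type invariance of the LAV chase'' precise: formally tracking the correspondence between the freshly created nulls of $\chase_\Sigma(\{t_j\})$ and those of $\chase_\Sigma(D)$ so that $\hat\rho$ is genuinely well defined (also when the chase is infinite), and verifying that renaming a constant of $t_j$ to a fresh constant of $D$ --- or, should $I$ contain nulls, a null to a constant --- really commutes with the entire chase and keeps the transported map the identity on constants. A side condition worth stating explicitly is that the argument as sketched assumes the TGDs carry no constants; otherwise $D_M$ must be enlarged so that the positions $z_i$ also range over the constants occurring in the dependency sets. Everything else --- the decomposition into single tuples and the reassembly of the homomorphisms --- is already supplied by Lemmas~\ref{lem:chase-union} and~\ref{lem:chase-union-2}.
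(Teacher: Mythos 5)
Your proposal is correct and follows essentially the same route as the paper's own (sketched) proof: the forward direction is the trivial specialisation, and the backward direction decomposes an arbitrary $I$ into single-tuple instances via Lemma~\ref{lem:chase-union}, matches each tuple's repetition pattern with a dummy instance in $D_M$, transfers the hypothesised homomorphism to that tuple, and reassembles everything before invoking Lemma~\ref{lem:chase}. The only difference is one of care, not of method: where the paper compresses the transfer step into the assertion ``$D \isom \{t\}$'' (which, under its constant-preserving notion of isomorphism, really means isomorphism up to renaming of constants), you make that renaming explicit via the conjugation $\hat\rho^{-1}\circ h\circ\hat\rho$ and correctly flag the equality-type invariance of the LAV chase, and the constant-free-TGD assumption, as the points needing justification.
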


\begin{proof} \textit{(sketch).}\\
  \textsl{(\underline{Only if}).} It follows from
  Definition~\ref{def:containment}, since $\chase_{\Sigma}(I)$ to
  $\chase_{\Sigma'}(I)$ are both solutions for $M, M'$
  respectively (in particular, they are universal solutions).\\
  \textsl{\underline{(If)}.} Consider a generic source instance $I$.  From
  Lemma~\ref{lem:chase-union} it is easy to see that $\chase_{\Sigma}(D) \isom
  \bigcup_{t\in I}\chase_{\Sigma}\{t\}$.  Then, for each $t \in I$ that
  generates at least one tuple in the chase procedure, there is a dummy
  instance $D \in D_M$ such that $D \isom \{t\}$.  By hypothesis, for each $D
  \in D_M$ we have $\chase_{\Sigma}(D) \homo \chase_{\Sigma'}(D)$, and
  therefore for every $t \in I$ we immediately have $\chase_{\Sigma}(\{t\})
  \homo \chase_{\Sigma'}(\{t\})$.  By Lemma~\ref{lem:chase-union}, it then
  holds $\chase_{\Sigma}(D) \isom \bigcup_{t\in I}\chase_{\Sigma}\{t\} \homo
  \bigcup_{t\in I}\chase_{\Sigma'}\{t\} \isom \chase_{\Sigma'}(D)$, and by
  Lemma~\ref{lem:chase} we get $M \subseteq M'$.
\end{proof}

The above result provides an insight into the containment problem, but since
in general the chase of a dummy instance $D \in D_M$ can be infinite, it is
not obvious if checking $\chase_{\Sigma}(D) \homo \chase_{\Sigma'}(D)$ is
decidable.  The following result provides an answer to the question.

\begin{theorem} \label{the:chase-hom} Given an instance $D$ and two sets of
  LAV TGDs $\Sigma_1$ and $\Sigma_2$, such that $\chase_{\Sigma_1}(D)$ is
  finite, checking whether $\chase_{\Sigma_1}(D) \homo \chase_{\Sigma_2}(D)$
  is decidable.
\end{theorem}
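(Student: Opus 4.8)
The plan is to reduce the question to Boolean conjunctive query answering under the target TGDs $\Sigma_2$, and then to invoke the known decidability of that problem for the class of LAV (indeed, guarded) TGDs.

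The first step exploits the hypothesis that $\chase_{\Sigma_1}(D)$ is \emph{finite}. Write $J_1 = \chase_{\Sigma_1}(D)$ and let $q_{J_1}$ be the Boolean conjunctive query obtained from $J_1$ by replacing every labelled null occurring in $J_1$ with a fresh existentially quantified variable, while keeping every constant as such (this is exactly the construction already used in the proof of Lemma~\ref{lem:univ}). Because $J_1$ is finite, $q_{J_1}$ is a genuine finite conjunctive query --- and this is the \emph{only} point at which finiteness of $\chase_{\Sigma_1}(D)$ is needed. By the definitions of homomorphism and of query evaluation, for any instance $K$, finite or infinite, we have $J_1 \homo K$ if and only if $K \models q_{J_1}$; in particular
\[
\chase_{\Sigma_1}(D) \homo \chase_{\Sigma_2}(D) \quad\text{iff}\quad \chase_{\Sigma_2}(D) \models q_{J_1}.
\]

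The second step removes the possibly infinite object $\chase_{\Sigma_2}(D)$ from the statement. Since $\chase_{\Sigma_2}(D)$ is a universal model of $D$ together with $\Sigma_2$ --- it contains $D$, satisfies $\Sigma_2$, and admits a homomorphism into every instance with those two properties~\cite{FKMP05} --- and since conjunctive queries are preserved under homomorphisms, $\chase_{\Sigma_2}(D) \models q_{J_1}$ holds if and only if $q_{J_1}$ holds in \emph{every} instance $K$ with $D \subseteq K$ and $K \models \Sigma_2$, i.e.\ if and only if $q_{J_1}$ is logically entailed by $D \cup \Sigma_2$. Finally, every LAV TGD $r(\x) \ra \exists\y\,\psi(\x,\y)$ has a single body atom, which therefore contains all of its universally quantified variables and can serve as a guard; hence $\Sigma_2$ is a set of \emph{guarded} TGDs. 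Since Boolean conjunctive query entailment from a finite set of facts together with a finite set of guarded TGDs is decidable (see the treatment of guarded TGDs and of Linear Datalog$^\pm$ in~\cite{CaGL09} and the references therein), we obtain a decision procedure for $\chase_{\Sigma_2}(D) \models q_{J_1}$, and hence for $\chase_{\Sigma_1}(D) \homo \chase_{\Sigma_2}(D)$.

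The only genuinely non-trivial ingredient --- and therefore the step I expect to be the main obstacle --- is this last one, the decidability of CQ entailment under guarded TGDs. A naive attempt that enumerates the finite partial chases $\chase^{[k]}_{\Sigma_2}(D)$ and tests $J_1 \homo \chase^{[k]}_{\Sigma_2}(D)$ for each $k$ yields only a semi-decision procedure for the positive case (any homomorphism has finite image and hence lands inside some $\chase^{[k]}_{\Sigma_2}(D)$), and gives no evident way to certify a negative answer, since constants of $D$ may propagate arbitrarily deep into $\chase_{\Sigma_2}(D)$. What rescues decidability is the regular, tree-shaped structure of the chase under (guarded) LAV TGDs, which bounds the atom ``types'' that need to be examined; packaging this as the known result for guarded TGDs keeps the argument short and is consistent with the machinery already cited in the paper. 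If a self-contained proof were preferred, one could instead aim to exhibit an explicit computable bound $B$, depending on $J_1$, $\Sigma_2$ and $D$, such that $J_1 \homo \chase_{\Sigma_2}(D)$ iff $J_1 \homo \chase^{[B]}_{\Sigma_2}(D)$, using precisely this bounded-type structure.
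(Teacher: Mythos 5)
Your proof is correct, but it takes a different route from the paper's. The paper argues directly on the chase structure, in the style of Johnson and Klug~\cite{JoKl84}: it introduces the \emph{level} of a tuple in $\chase_{\Sigma_2}(D)$ and claims that it suffices to test the homomorphism from the finite instance $\chase_{\Sigma_1}(D)$ into the finite initial segment of $\chase_{\Sigma_2}(D)$ consisting of its first $|\Sigma_2|\cdot(W+1)^W$ levels, where $W$ is the maximum predicate arity in $\Sigma_2$ --- i.e., it exhibits exactly the explicit computable bound $B$ that you mention only as an alternative in your closing remark. You instead view $\chase_{\Sigma_1}(D)$ as a Boolean conjunctive query $q_{J_1}$ (the same freezing trick the paper uses in Lemma~\ref{lem:univ}), use universality of $\chase_{\Sigma_2}(D)$ to rewrite the homomorphism test as entailment of $q_{J_1}$ from $D\cup\Sigma_2$, and then invoke the known decidability of conjunctive query entailment under guarded (indeed linear) TGDs, of which LAV TGDs are a special case. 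Both arguments are sound and rest on the same underlying phenomenon --- the bounded ``type'' structure of the chase under single-body-atom TGDs, which is precisely what powers the Johnson--Klug bound and its Datalog$^\pm$ descendants~\cite{CaGL09}. Your packaging is more modular and immediately generalises beyond LAV to guarded target TGDs, at the price of treating the hard combinatorial core as a black box; the paper's version is (in sketch form) self-contained and yields a concrete finite segment to check against, which is closer to an implementable procedure. One small point to watch in your step reducing to entailment: the equivalence with ``$q_{J_1}$ holds in every $K\supseteq D$ with $K\models\Sigma_2$'' needs $K$ to range over arbitrary, possibly infinite, instances (or an appeal to finite controllability for guarded TGDs); since the chase itself may be infinite this is the natural reading, but it is worth stating explicitly.
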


\begin{proof}
  \textit{(sketch).}  The proof is a direct consequence of the results
  in~\cite{JoKl84}.  If $\chase_{\Sigma_2}$ is finite, the result is trivial.
  Let us consider the case where $\chase_{\Sigma_2}$ is infinite.
  Preliminarly, we have to introduce the notion of \emph{level} of a tuple in
  the chase of an instance $I$.  The notion of level is inductively defined as
  follows.  All tuples in $I$ have level $0$; if a tuple $t_2$ is added in a
  chase step applied on a tuple $t_1$ at level $\ell$, then $t_2$ has level
  $\ell+1$.
  To check whether $\chase_{\Sigma_1}(D) \homo \chase_{\Sigma_2}(D)$, it
  suffices to check whether there exists a homomorphism from
  $\chase_{\Sigma_1}(D)$ (which is finite) to the (finite) segment of
  $\chase_{\Sigma_2}(D)$ constitute by its first $|\Sigma_2| \cdot
  % (|Sch| + |Scht|)
  \cdot (W+1)^W$ levels, where $W$ is the maximum arity of predicates
  appearing in $\Sigma_2$.
\end{proof}

The following result is a direct consequence of Theorems~\ref{the:containment}
and~\ref{the:chase-hom}.

\begin{theorem} \label{the:final} Consider two schema mappings $M$ and $M'$ as
  above such that, for every instance $D$ in the set of dummy instance $D_M$
  for $M$, $\chase_{\Sigma_1}(D)$ is finite.  We have that checking whether $M
  \subseteq M'$ is decidable.
\end{theorem}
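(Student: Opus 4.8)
The plan is to combine the two preceding theorems directly.  Theorem~\ref{the:containment} reduces the question ``$M \subseteq M'$?'' to a finite conjunction of homomorphism tests, one for each dummy instance $D$ in the set $D_M$: namely, $M \subseteq M'$ holds if and only if $\chase_{\Sigma}(D) \homo \chase_{\Sigma'}(D)$ for every $D \in D_M$, where $\Sigma = \Sigma_{st} \cup \Sigma_t$ and $\Sigma' = \Sigma'_{st} \cup \Sigma'_t$.  So the whole decision procedure boils down to: (i) generate $D_M$; (ii) for each $D \in D_M$, decide the single homomorphism test $\chase_{\Sigma}(D) \homo \chase_{\Sigma'}(D)$; (iii) answer ``yes'' iff all these tests succeed.

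First I would observe that $D_M$ is a finite set: by the Definition of dummy databases, $D_M$ is built from the finitely many relational symbols $r$ occurring in the bodies of the s-t TGDs in $\Sigma_{st}$, and for each such $r$ of arity $n$ there are only finitely many tuples $r(z_1, \ldots, z_n)$ over a fixed pool of $n$ fresh constants (finitely many equality patterns on $n$ positions).  Hence step (i) terminates and produces finitely many instances.  Next, for step (ii), I would invoke Theorem~\ref{the:chase-hom}: its hypothesis is exactly that $\chase_{\Sigma_1}(D)$ is finite, where here $\Sigma_1$ plays the role of $\Sigma = \Sigma_{st}\cup\Sigma_t$, and this is precisely what is assumed in the statement of Theorem~\ref{the:final} for every $D \in D_M$.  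Therefore each homomorphism test $\chase_{\Sigma}(D) \homo \chase_{\Sigma'}(D)$ is decidable (one searches for a homomorphism from the finite instance $\chase_{\Sigma}(D)$ into the bounded initial segment of $\chase_{\Sigma'}(D)$ described in the proof of Theorem~\ref{the:chase-hom}).  Since a finite conjunction of decidable statements is decidable, step (iii) — and hence the whole procedure — is effective, giving decidability of ``$M \subseteq M'$''.

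I do not anticipate a genuine obstacle here, since this theorem is explicitly a corollary: the real work was done in Theorems~\ref{the:containment} and~\ref{the:chase-hom}.  The only points that need a word of care are (a) checking that the finiteness hypothesis on $\chase_{\Sigma_1}(D)$ in Theorem~\ref{the:chase-hom} matches the finiteness hypothesis imposed in Theorem~\ref{the:final} — they do, with $\Sigma_1 = \Sigma$ — and (b) making explicit that $D_M$ is finite so that we are taking a finite, not infinite, conjunction of tests.  Both are routine.  If anything, the mildly delicate step is simply being precise that Theorem~\ref{the:containment} is stated for $M' = (\Sch,\Scht,\Sigma'_{st},\Sigma_t)$ (same target TGDs) — so the reduction and therefore this corollary inherit that same scope — but within that scope the argument is immediate.
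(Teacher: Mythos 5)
Your proposal is correct and matches the paper's own argument: the paper proves Theorem~\ref{the:final} exactly as "a direct consequence of Theorems~\ref{the:containment} and~\ref{the:chase-hom}", i.e., finitely many dummy instances, each yielding a decidable homomorphism test under the finite-chase hypothesis. Your added remarks on the finiteness of $D_M$ and the matching of the hypotheses are sound elaborations of the same route, not a different one.
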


\begin{example} \label{ex:third} Let us consider again the schema mappings
  $M=(\Sch, \Scht, \Sigma_{st}, \Sigma_t)$ and $M'=(\Sch, \Scht, \Sigma'_{st},
  \Sigma'_t)$ of Example~\ref{ex:first}.  As shown in Example~\ref{ex:second},
  the set of dummy instances is $D_M = \{D_1, D_2\}$ where $D_1 =
  \{r_1(z_1,z_2)\}$ and $D_2 = \{r_1(z_1,z_1)\}$.  Let us now apply the chase
  procedure to $D_1$ and $D_2$ using the dependencies in $\Sigma$ and
  $\Sigma'$.  We obtain the instances: $\chase_{\Sigma}(D_1) =
  \{r_2(z_2,\alpha_1), r_3(\alpha_2,z_2), r_3(\alpha_3,z_2)\}$, and
  $\chase_{\Sigma'}(D_1) = \{r_2(z_2,\beta_1), r_3(\beta_2, z_2)\}$.  As for
  $D_2$, we immediately have $\chase_{\Sigma}(D_2) \isom \chase_{\Sigma}(D_1)$
  and $\chase_{\Sigma'}(D_2) \isom \chase_{\Sigma'}(D_1)$.  We immediately
  notice $\chase_{\Sigma}(D_1) \homo \chase_{\Sigma'}(D_1)$,
  $\chase_{\Sigma}(D_2) \homo \chase_{\Sigma'}(D_2)$, therefore $M \subseteq
  M'$.  With an analogous test we also get $M' \subseteq M$.
\end{example}

\section{Conclusions and Open Problems}
\label{sec:conclusions}

In this paper we have investigated the problem of containment and equivalence
of schema mappings for data exchange.  As a first contribution, we have
provided a technique for testing a containment $M \subseteq M'$ between two
schema mappings, that is based on the existence of a homomorphism between the
chase of ``dummy'' instances.  Our technique is applicable in the case where
both the source-to-target TGDs and the target TGDs are LAV TGDs, and the chase
of all dummy instances under the TGDs in $M$ is finite.  With respect to the
results of~\cite{FKNP08}, which is a paper closely related to our work, our
notion of containment corresponds to the notion of \emph{containment with
  respect to conjunctive query answering} in~\cite{FKNP08} and extensions thereof~\cite{CM:EDBT2010,CM:TPLP2010}.  We think this is
a natural definition if we are querying the target schema with conjunctive
queries, and it can be extended in case of different query languages (see
below).  In~\cite{FKNP08} it is also shown that the problem is undecidable for
full TGDs on the target schema (full TGDs are TGDs, not necessarily of the LAV
kind, that do not have existentially quantified variables in the head).  Our
opinion is that it is important to consider restricted classes of TGDs on the
target schema that are able to capture interesting real-world cases.  In
particular, the class of LAV TGDs that we have considered generalises
inclusion dependencies (see~\cite{AbHV95}) and also the Datalog$^\pm$
family~\cite{CaGL09}.  In particular, the latter is able to represent most
ontology formalisms that are used in the Semantic Web.  In~\cite{CaGL09} it is
also shown how relevant classes of EGDs, which generalise common database
constraints such as key and functional dependencies, can be combined with TGDs
so that TGDs and EGDs do not interact, and all the results for TGDs alone
apply also in the presence of EGDs; we refer the reader to the paper for the
details.
We note that the dependencies considered in this paper may also be understood as integrity constraints whose application may be incrementally checked \cite{M:FQAS2004,CM:LOPSTR2003,CM:AAI2000}, possibly in the presence of uncertain or diversified data~\cite{DM:FlexDBIST2006,DM:FlexDBIST2007,DM:QOIS2009,FMT:SIGMOD2012}.

% In this way, relevant classes of constraints commonly used in database
% design, such as keys and foreign keys, are special cases of our class.

We believe that this preliminary study opens a number of very challenging
issues, which will be subject of future research.
\begin{plist}\itemsep-\parsep
% \item the investigation of a more general notion of schema mapping containment
%   that relies on the containments of data exchange solutions;
\item The identification of more general conditions
  % on source-to-target and on target dependencies
  under which the problem is decidable. % and a practical test exists;
  In particular, we intend to remove the condition of finiteness of the chase
  of dummy instances w.r.t.~the left-hand-side TGDs.  We conjecture that the
  containment problem is also decidable when the above condition is
  removed~\cite{GoMa09}.
\item The definition of a notion of ``minimal'' schema mapping, based on the
  efficient generation of the schema exchange solution.  This notion may rely
  on the notion of \emph{core} solution~\cite{GoNa06}; however, it will be
  important to characterise what makes a mapping better than another;
\item A method for ``reducing'' a schema mapping into a one that is minimal
  according to the above definition, and equivalent with respect to
  conjunctive query answering;
\item A definition of containment with respect to more expressive classes of
  queries than conjunctive queries; this would also require a novel definition
  of certain answers (and possibly of solutions).
\end{plist}

\paragraph{Acknowledgements.}  Andrea Cal\`\i{} was partially supported by the
EPSRC project \mbox{EP/E010865/1} \textit{Schema Mappings and Automated
  Services for Data Integration and Exchange}.

\end{document}